\newtheorem{theorem}{Theorem}
\newtheorem{lemma}{Lemma}
\newtheorem{prop}{Proposition}%[section]
\newtheorem{corollary}{Corollary}
\begin{document}

\title{The vulnerability of the diameter of enhanced hypercubes}

\author
{{\large Meijie Ma$^{a,b,}$\footnote{Corresponding author:
mameij@mail.ustc.edu.cn} \quad Douglas B. West$^{b,c}$ \quad
Jun-Ming Xu$^{d}$ }\\
{\small $^a$ School of Management Science and Engineering}\\
{\small Shandong Institute of Business and Technology, Yantai 264005, China}\\
{\small $^b$Department of Mathematics}\\
{\small Zhejiang Normal University, Jinhua 321004, China}\\
{\small $^c$Department of Mathematics}  \\
{\small University of Illinois, Urbana, IL 61801, USA}\\
{\small $^d$School of Mathematical Sciences}\\
{\small  University of Science and Technology of China, Hefei 230026, China}\\
}

\date{}
\maketitle

\def\esub{\subseteq}
\def\VEC#1#2#3{#1_{#2},\ldots,#1_{#3}}
\def\fktwo{\lfloor \frac k2\rfloor}
\def\cktwo{\lceil \frac k2\rceil}

\setlength{\baselineskip}{22pt}

\begin{abstract}
For an interconnection network $G$, the {\it $\omega$-wide diameter}
$d_\omega(G)$ is the least $\ell$ such that any two vertices are
joined by $\omega$ internally-disjoint paths of length at most
$\ell$, and the {\it $(\omega-1)$-fault diameter} $D_{\omega}(G)$ is
the maximum diameter of a subgraph obtained by deleting fewer than
$\omega$ vertices of $G$.

The enhanced hypercube $Q_{n,k}$ is a variant of the well-known
hypercube.  Yang, Chang, Pai, and Chan gave an upper bound for
$d_{n+1}(Q_{n,k})$ and $D_{n+1}(Q_{n,k})$ and posed the problem of
finding the wide diameter and fault diameter of $Q_{n,k}$. By
constructing internally disjoint paths between any two vertices in
the enhanced hypercube, for $n\ge3$ and $2\le k\le n$ we prove that
$D_\omega(Q_{n,k})=d_\omega(Q_{n,k})=d(Q_{n,k})$  for $1 \leq \omega
< n-\lfloor\frac{k}{2}\rfloor$;
$D_\omega(Q_{n,k})=d_\omega(Q_{n,k})=d(Q_{n,k})+1$ for
$n-\lfloor\frac{k}{2}\rfloor \leq \omega \leq n+1$, where
$d(Q_{n,k})$ is the diameter of $Q_{n,k}$. These results mean that
interconnection networks modelled by enhanced hypercubes are
extremely robust.
\end{abstract}

\vskip6pt{\bf Keywords}:\ interconnection network; enhanced
hypercube; wide diameter; fault diameter.

\section{Introduction}

An interconnection network is conveniently represented by an
undirected graph. The vertices (or edges) of the graph represent the
nodes (or links) of the network. Throughout this paper, vertex and
node, edge and link, graph and network are used interchangeably.
Reliability and efficiency are important criteria in the design of
interconnection networks.  In the study of fault-tolerance and
transmission delay of networks, wide diameter and fault diameter are
important parameters that have been studied by many researchers.
They combine connectivity with diameter to measure simultaneously
the fault-tolerance and efficiency of parallel processing computer
networks. These parameters were studied by several authors for some
Cartesian product graphs \cite{ez2013, x04, xxh05} and for the
hypercube and its variants \cite{c10, de97, kllht09,l1993,
sl12,sy1997, tct2014}.

Let $u$ and $v$ be two vertices in a network $G$.  A {\it $u,v$-path} is a path
with endpoints $u$ and $v$.  The {\it distance} between $u$ and $v$, denoted by
$d(u,v)$, is the minimum length (number of edges) of a $u,v$-path.
The {\it diameter} of $G$, denoted by $d(G)$, is the maximum distance between
vertices.  The {\it connectivity} $\kappa(G)$ is the minimum number of
vertices whose removal results in a disconnected or $1$-vertex network.
We say that $G$ is {\it $k$-connected} when $0 < k \leq \kappa(G)$.  By
Menger's Theorem~\cite{Menger}, in a $k$-connected network there exist $k$
internally disjoint paths joining any two vertices ({\it internally disjoint}
means that the only shared vertices are the endpoints).

Given a $k$-connected graph $G$, fix $\omega$ with $1\le\omega\le k$.
The {\it $\omega$-wide diameter} of $G$, denoted by $d_\omega(G)$, is the least
$\ell$ such that for any $u,v\in V(G)$ there exist $\omega$ internally disjoint
$u,v$-paths of length at most $\ell$.  Throughout this paper, we abuse
terminology by writing ``disjoint paths'' to mean ``internally disjoint paths''.
Note that $d_1(G)$ is just the diameter $d(G)$ of $G$.  From the definition,
$$d(G)=d_1(G)\leq d_2(G)\leq \cdots \leq d_{k-1}(G)\leq d_k(G).
$$

Failures are inevitable when a network is put in use. Therefore, it is
important to consider faulty networks.  The {\it $(\omega-1)$-fault diameter}
of a graph $G$, denoted by $D_\omega(G)$, is the maximum diameter among
subgraphs obtained from $G$ by deleting fewer than $\omega$ vertices; it
measures the worst-case effect on the diameter when vertex faults occur.
Note that $D_\omega(G)$ is well-defined if and only if
$G$ is $\omega$-connected, moreover,
$$d(G)=D_1(G)\leq D_2(G)\leq \cdots \leq D_{k-1}(G)\leq D_k(G). $$

From the definitions, it follows that $D_\omega(G)\leq d_\omega(G)$
when $G$ is $k$-connected and $1\le\omega\le k$~\cite{lcch98}.
Equality holds for some well-known networks~\cite{dhl96,lc99}.

As a topology for an interconnection network of a multiprocessor
system, the hypercube is a widely used and well-known
model, since it possesses many attractive properties
such as regularity, symmetry, logarithmic diameter, high
connectivity, recursive construction, ease of bisection, and
relatively low link complexity~\cite{l92,ss88,x01}.  We study an important
variant of the hypercube $Q_n$, the enhanced hypercube $Q_{n,k}$ proposed by
Tzeng and Wei~\cite{tw1991}; its properties have been studied
in~\cite{clthh2009, l08, w1994, w1999, ycpc2014}. We give the definition and
basic properties of $Q_{n,k}$ in Section 2.

It was shown by Liu~\cite{l08} that $\kappa (Q_{n,k})=n+1$. Thus,
the wide diameter $d_\omega(Q_{n,k})$ and the fault diameter
$D_\omega(Q_{n,k})$ are well-defined when $\omega \leq n+1$.
Yang, Chang, Pai, and Chan~\cite{ycpc2014} gave an upper bound for
$d_{n+1}(Q_{n,k})$ and $D_{n+1}(Q_{n,k})$, and they posed the problem of
finding the wide diameter and fault diameter of $Q_{n,k}$. In this
paper, for $n\ge3$ and $2\le k\le n$, we prove
$$
D_\omega(Q_{n,k})=d_\omega(Q_{n,k})=\begin{cases}
d(Q_{n,k}) & \textrm{for $1 \leq \omega < n-\lfloor\frac{k}{2}\rfloor$;}\\
d(Q_{n,k})+1 & \textrm{for $n-\lfloor\frac{k}{2}\rfloor \leq \omega \leq n+1$.}
\end{cases}
$$
The special case $k=n$ (folded hypercube) was obtain earlier by
Sim\'o and Yebra~\cite{sy1997}, along with the same values for edge deletions.
For enhanced hypercubes also, our arguments yield the same values for
edge deletions as for vertex deletions.

\section{Properties of $Q_{n,k}$}

%For graph-theoretic terminology and notation not defined here, we
%follow~\cite{x01}.
Let $x_n\cdots x_1$ be an $n$-bit binary string.  We call the rightmost bit
the first bit and the leftmost bit the $n$th bit. For simplicity we use
$a^i$ to mean that the bit $a$ is repeated $i$ times;
for example, $01^30^2 = 011100$.  The {\it Hamming distance}
between strings $u$ and $v$, denoted by $H(u,v)$, is the number of
positions where the two strings differ.

The {\it $n$-dimensional  hypercube} $Q_n$ is the graph whose vertices are the
$n$-bit binary strings and whose edges are the pairs of vertices differing in
exactly one position.  An edge of $Q_n$ is a {\it $j$-dimensional edge} if the
two endpoints differ in the $j$th position.  For $1\le j\le n$, let $E_j$
denote the set of $j$-dimensional edges in $Q_n$.

As a variant of the hypercube, the {\it $n$-dimensional folded hypercube}
$FQ_n$, proposed first by El-Amawy and Latifi~\cite{as91}, is obtained from the
hypercube $Q_n$ by making each vertex $u$ adjacent to its complementary vertex,
denoted $\bar u$ and obtained from $u$ by subtracting each bit from $1$.
Such an edge is often called a {\it complementary edge}.

For $2\leq k\leq n$, the {\it $n$-dimensional enhanced hypercube} $Q_{n,k}$ is
obtained from the hypercube $Q_n$ by adding the edge $uv$ whenever $u$ and $v$
are related by $u=x_n\cdots x_1$ and
$v=x_n\cdots x_{k+1}\bar{x}_k\bar{x}_{k-1}\cdots \bar{x}_1$; that is, the
first $k$ bits are complemented.
Such an edge is called a {\it $k$-complementary edge}.
For convenience, we use $E_{0}$ to denote the set of $k$-complementary edges.
Thus $E(Q_{n,k})=E(Q_n)\cup E_{0}$. When $k=n$, we have $Q_{n,n}=FQ_n$; hence
the enhanced hypercube is a generalization of the folded hypercube.
The graphs shown in Fig.~\ref{f1} are $Q_{3,3}$ and $Q_{4,3}$, where
the hypercube edges and $3$-complementary edges are represented by
solid lines and dashed lines, respectively.

\begin{figure}[h!]
\label{fig:EQ1}
\begin{pspicture}(-0.6,-.0)(3,4)
\psset{radius=.08, unit=.8}

\Cnode(1,1){000}\rput(.6,0.75){\tiny000}
\Cnode(3.5,1){010}\rput(3.8,.75){\tiny010}
\Cnode(1,3.5){100}\rput(.5,3.5){\tiny100}
\Cnode(3.5,3.5){110}\rput(3.4,3.8){\tiny110}

\Cnode(2.3,1.8){001}\rput(2.5,1.6){\tiny001}
\Cnode(4.8,1.8){011}\rput(5.25,1.98){\tiny011}
\Cnode(2.3,4.3){101}\rput(2.3,4.6){\tiny101}
\Cnode(4.8,4.3){111}\rput(4.8,4.6){\tiny111}

\ncline[linewidth=1.pt]{000}{100} \ncline[linewidth=1.pt]{001}{011}
\ncline[linewidth=1.pt]{101}{111} \ncline[linewidth=1.pt]{011}{111}

\ncline[linewidth=1.pt]{001}{101} \ncline[linewidth=1.pt]{000}{010}
\ncline[linewidth=1.pt]{100}{110} \ncline[linewidth=1.pt]{010}{110}

\ncline[linewidth=1.pt]{010}{011} \ncline[linewidth=1.pt]{110}{111}
\ncline[linewidth=1.pt]{100}{101} \ncline[linewidth=1.pt]{000}{001}

\ncline[linestyle=dashed, dash=3pt
2pt]{101}{010}\ncline[linestyle=dashed, dash=3pt 2pt]{100}{011}

\ncline[linestyle=dashed, dash=3pt
2pt]{000}{111}\ncline[linestyle=dashed, dash=3pt 2pt]{001}{110}

\rput(2.3,0.1){\scriptsize (a) $Q_{3,3}$}
\end{pspicture}
\begin{pspicture}(-1.6,-.0)(6,4)
\psset{radius=.08, unit=.8}

\Cnode(1,1){0000}\rput(.6,0.75){\tiny0000}
\Cnode(3.5,1){0010}\rput(3.8,.75){\tiny0010}
\Cnode(1,3.5){0100}\rput(.5,3.5){\tiny0100}
\Cnode(3.5,3.5){0110}\rput(3.36,3.75){\tiny0110}

\Cnode(2.3,1.8){0001}\rput(2.5,1.6){\tiny0001}
\Cnode(4.8,1.8){0011}\rput(5.25,1.98){\tiny0011}
\Cnode(2.3,4.3){0101}\rput(2.3,4.6){\tiny0101}
\Cnode(4.8,4.3){0111}\rput(4.8,4.6){\tiny0111}

\ncline[linewidth=1.pt]{0000}{0100}
\ncline[linewidth=1.pt]{0001}{0011}
\ncline[linewidth=1.pt]{0101}{0111}
\ncline[linewidth=1.pt]{0011}{0111}

\ncline[linewidth=1.pt]{0001}{0101}
\ncline[linewidth=1.pt]{0000}{0010}
\ncline[linewidth=1.pt]{0100}{0110}
\ncline[linewidth=1.pt]{0010}{0110}

\ncline[linewidth=1.pt]{0010}{0011}
\ncline[linewidth=1.pt]{0110}{0111}
\ncline[linewidth=1.pt]{0100}{0101}
\ncline[linewidth=1.pt]{0000}{0001}

\ncline[linestyle=dashed, dash=3pt
2pt]{0101}{0010}\ncline[linestyle=dashed, dash=3pt 2pt]{0100}{0011}

\ncline[linestyle=dashed, dash=3pt
2pt]{0000}{0111}\ncline[linestyle=dashed, dash=3pt 2pt]{0001}{0110}

\Cnode(6,1){1010}\rput(5.6,0.75){\tiny1010}
\Cnode(8.5,1){1000}\rput(8.8,.75){\tiny1000}
\Cnode(6,3.5){1110}\rput(6.7,3.65){\tiny1110}
\Cnode(8.5,3.5){1100}\rput(8.4,3.8){\tiny1100}

\Cnode(7.3,1.8){1011}\rput(7.5,1.6){\tiny1011}
\Cnode(9.8,1.8){1001}\rput(10.25,1.8){\tiny1001}
\Cnode(7.3,4.3){1111}\rput(7.3,4.6){\tiny1111}
\Cnode(9.8,4.3){1101}\rput(9.8,4.6){\tiny1101}

\ncline[linewidth=1.pt]{1000}{1100}
\ncline[linewidth=1.pt]{1001}{1011}
\ncline[linewidth=1.pt]{1101}{1111}
\ncline[linewidth=1.pt]{1011}{1111}

\ncline[linewidth=1.pt]{1001}{1101}
\ncline[linewidth=1.pt]{1000}{1010}
\ncline[linewidth=1.pt]{1100}{1110}
\ncline[linewidth=1.pt]{1010}{1110}

\ncline[linewidth=1.pt]{1010}{1011}
\ncline[linewidth=1.pt]{1110}{1111}
\ncline[linewidth=1.pt]{1100}{1101}
\ncline[linewidth=1.pt]{1000}{1001}

\ncline[linewidth=1.pt]{1010}{0010}
\ncline[linewidth=1.pt]{1011}{0011}
\ncline[linewidth=1.pt]{1110}{0110}
\ncline[linewidth=1.pt]{1111}{0111}

\ncline[linestyle=dashed, dash=3pt
2pt]{1101}{1010}\ncline[linestyle=dashed, dash=3pt 2pt]{1100}{1011}

\ncline[linestyle=dashed, dash=3pt
2pt]{1000}{1111}\ncline[linestyle=dashed, dash=3pt 2pt]{1001}{1110}

\nccurve[angleA=17,angleB=163]{0101}{1101}
\nccurve[angleA=17,angleB=163]{0100}{1100}
\nccurve[angleA=17,angleB=163]{0001}{1001}
\nccurve[angleA=17,angleB=163]{0000}{1000}

\rput(4.8,0.1){\scriptsize (b) $Q_{4,3}$}

\end{pspicture}
\caption{\label{f1} Enhanced hypercubes $Q_{3,3}$ and $Q_{4,3}$}
\end{figure}
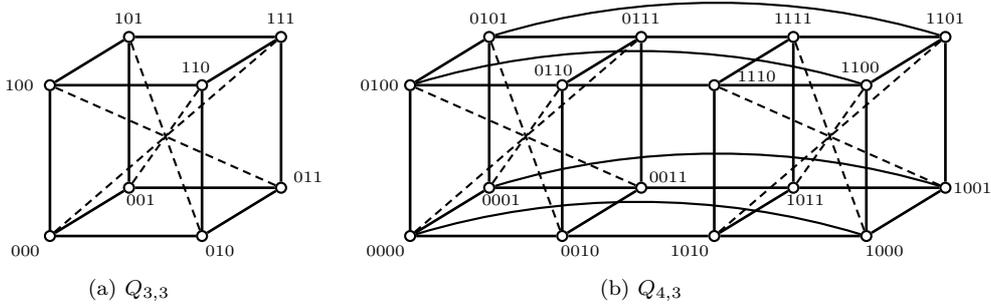

A graph $G$ is {\it vertex-transitive} if for any $u,v\in V(G)$
there is some $\sigma \in Aut(G)$, the automorphism group of $G$,
such that $\sigma(u)=v$; it is {\it edge-transitive} if for any $xy,
uv \in E(G)$ there is some $\sigma \in Aut(G)$ such that
$\{\sigma(x), \sigma(y)\} = \{u, v\}$. The hypercube $Q_n$ and
folded hypercube $FQ_{n}$ are vertex-transitive and edge-transitive,
and the enhanced hypercube $Q_{n,k}$ is vertex-transitive but not
edge-transitive when $k<n$~\cite{mx2010, x01, ycpc2014}.

The {\it Cartesian product} $G \Box H$ of graphs $G$ and $H$ is
the graph with vertex set $V(G)\times V(H)$, in which vertices
$(u,v)$ and $(u',v')$ are adjacent whenever $uu'\in E(G)$ and
$v=v'$, or $u=u'$ and $vv'\in E(H)$. By the definition of
$Q_{n,k}$, we have $Q_{n,k}=Q_{n-k}\Box FQ_{k}$. Although
$Q_{n,k}$ is not edge-transitive when $2\leq k< n$, we have the
following property.

\begin{prop}\label{prop1}
Permuting the first $k$ positions and/or permuting the last $n-k$ positions
in the names of the vertices of $Q_{n,k}$ does not change the graph.
\end{prop}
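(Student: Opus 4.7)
The plan is to realize each claimed symmetry as an explicit automorphism of $Q_{n,k}$. Fix permutations $\pi$ of $\{1,\ldots,k\}$ and $\tau$ of $\{k+1,\ldots,n\}$, and define $\phi_{\pi,\tau}\colon V(Q_{n,k})\to V(Q_{n,k})$ by
$$\phi_{\pi,\tau}(x_n\cdots x_1)=y_n\cdots y_1,\qquad y_i=\begin{cases}x_{\pi(i)}&\text{if }1\le i\le k,\\ x_{\tau(i)}&\text{if }k+1\le i\le n.\end{cases}$$
Clearly $\phi_{\pi,\tau}$ is a bijection on $\{0,1\}^n$ with inverse $\phi_{\pi^{-1},\tau^{-1}}$, so it suffices to check that $\phi_{\pi,\tau}$ sends edges to edges; applying the same to $\phi_{\pi^{-1},\tau^{-1}}$ then gives that edges are preserved in both directions.

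Since $E(Q_{n,k})=E(Q_n)\cup E_0$, I would handle the two edge types separately. For a hypercube edge $uv\in E_j$, the endpoints differ in exactly the $j$th position, so their images differ in exactly position $\pi^{-1}(j)$ if $j\le k$ and in position $\tau^{-1}(j)$ if $j>k$; in either case the image is again a hypercube edge (in fact in $E_{\pi^{-1}(j)}$ or $E_{\tau^{-1}(j)}$, respectively). For a $k$-complementary edge $uv\in E_0$, the endpoints agree on positions $k+1,\ldots,n$ and are bitwise complementary on positions $1,\ldots,k$. Because $\pi$ permutes $\{1,\ldots,k\}$ as a set and $\tau$ permutes $\{k+1,\ldots,n\}$ as a set, the image pair still agrees on positions $k+1,\ldots,n$ and is still bitwise complementary on positions $1,\ldots,k$, so it lies in $E_0$.

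The only non-cosmetic point is the preservation of $E_0$, and this is precisely why the statement restricts to permutations that respect the block structure $\{1,\ldots,k\}\mid\{k+1,\ldots,n\}$: the definition of a $k$-complementary edge treats the first $k$ coordinates as an unordered set (all of them get complemented simultaneously) but distinguishes this set from the remaining $n-k$ coordinates, so only within-block permutations leave the $k$-complementary relation intact. Once this observation is made, the verification is pure bookkeeping, and the statement follows. An equivalent, shorter route is to note that $Q_{n,k}=Q_{n-k}\,\Box\,FQ_k$, that coordinate permutations are automorphisms of $Q_{n-k}$ and of $FQ_k$ separately, and that the Cartesian product of automorphisms is an automorphism of the product.
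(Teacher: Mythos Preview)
Your proof is correct and follows essentially the same approach as the paper: both verify directly that within-block coordinate permutations preserve the adjacency relation of $Q_{n,k}$. The paper is terser, reducing to transpositions (which generate all permutations) and noting that a transposition within a block does not change the set of coordinates in which two vertices differ; you instead handle an arbitrary $(\pi,\tau)$ at once and explicitly split into the cases $E_j$ and $E_0$, which makes the role of the block structure more visible. Your closing remark via $Q_{n,k}=Q_{n-k}\,\Box\,FQ_k$ is a nice alternative not mentioned in the paper's proof.
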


\begin{proof}
Exchanging the $i$th bit and $j$th bit among the first $k$ or among
the last $n-k$ preserves the adjacency relation, since the number of
coordinates in which two vertices differ is not changed
by exchanging such coordinates.  An arbitrary permutation is obtained
by a succession of such exchanges.
\end{proof}

\begin{prop}\label{prop2}
Given $u,v\in V(Q_{n,k})$, let $r$ and $s$ be the numbers of positions
in which $u$ and $v$ differ among the first $k$ and last $n-k$ positions,
respectively.  The distance between $u$ and $v$ is computed by
$d(u,v)=s+\min\{r,k-r+1\}$.
\end{prop}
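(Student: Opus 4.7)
The plan is to prove matching lower and upper bounds on $d(u,v)$, handling the three edge types of $Q_{n,k}$ (dimension-$j$ edges for $j\le k$, dimension-$j$ edges for $j>k$, and $k$-complementary edges in $E_0$) separately.

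For the upper bound I will exhibit a $u,v$-walk of length $s+\min\{r,k-r+1\}$. Only hypercube edges in dimensions $k+1,\ldots,n$ alter the last $n-k$ coordinates, so using exactly $s$ such edges, one for each discrepancy, is both necessary and sufficient to align those coordinates. For the first $k$ coordinates I will present two constructions and keep the shorter one: either flip the $r$ differing bits one at a time using $r$ hypercube edges of dimensions in $\{1,\ldots,k\}$, or traverse one $k$-complementary edge, which flips all of the first $k$ bits at once and thereby turns the $r$ differing positions into $k-r$ differing positions, and then use $k-r$ hypercube edges to correct the rest, at total cost $1+(k-r)$.

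For the lower bound I will use a parity argument. Given any $u,v$-walk $W$, let $a_0$ be the number of $k$-complementary edges in $W$, and for $1\le j\le n$ let $a_j$ be the number of $j$-dimensional hypercube edges in $W$; then $W$ has length $\sum_{j=0}^n a_j$. Each traversal of an edge in $E_j$ with $j\ge 1$ flips only coordinate $j$, while each traversal of an edge in $E_0$ flips all of coordinates $1,\ldots,k$ simultaneously. Matching parities against the coordinatewise differences between $u$ and $v$ forces $a_j$ to be odd whenever $u_j\ne v_j$ for $j>k$, so $\sum_{j>k}a_j\ge s$. For $j\le k$ the parity of $a_j+a_0$ must match the discrepancy at position $j$: if $a_0$ is even, then $a_j\ge 1$ at each of the $r$ positions where $u$ and $v$ differ, giving $\sum_{j\le k}a_j\ge r$; if $a_0$ is odd, then $a_j\ge 1$ at each of the $k-r$ positions where they agree, giving $\sum_{j\le k}a_j\ge k-r$ together with $a_0\ge 1$. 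In either case the length of $W$ is at least $s+\min\{r,k-r+1\}$.

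The argument is essentially bookkeeping; the only subtlety I expect is remembering that an edge in $E_0$ contributes simultaneously to the parity of each of the first $k$ coordinates, so those $k$ parity constraints cannot be treated independently but must be grouped by the parity of $a_0$, which is precisely what produces the $\min\{r,k-r+1\}$ expression.
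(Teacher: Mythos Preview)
Your proof is correct and follows essentially the same approach as the paper: an explicit walk for the upper bound, and a case split on the use of $k$-complementary edges for the lower bound. Your parity bookkeeping (splitting on whether $a_0$ is even or odd) is a more explicit and slightly cleaner formalization of what the paper sketches when it argues that a shortest path uses at most one $k$-complementary edge.
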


\begin{proof}
The distance is the minimum number of steps to change $u$ into $v$.  All steps
change one bit, except that a $k$-complementary edge changes the first $k$
bits.  If no $k$-complementary edge is used, then the number of steps is at
least the Hamming distance, and this suffices.  For this reason, a shortest
path uses at most one $k$-complementary edge.  If a $k$-complementary edge is
used, then the $k-r$ positions in which $u$ and $v$ agree among the first $k$
must be changed individually.
\end{proof}

Proposition~\ref{prop2} immediately yields
$d(Q_{n,k})=(n-k)+\lceil\frac {k}{2}\rceil$, which equals
$n-\lfloor \frac {k}{2}\rfloor$.
This was observed by Tzeng and Wei~\cite{tw1991}, along with
an algorithm for finding shortest paths joining vertices.
Note that if $u$ and $v$ differ in more than $\lceil \frac {k}{2}\rceil$
positions among the first $k$, then every shortest
$u,v$-path contains exactly one $k$-complementary edge.

\section{Construction of Paths}

Many properties of interconnection networks were investigated by
different construction methods of paths~\cite{chf14, l15}. In this
section we will prove our main results by constructing disjoint
paths of bounded length joining any two vertices in $Q_{n,k}$.

Let $P$ be a path $u_0\rightarrow u_1\rightarrow \cdots \rightarrow
u_{\ell-1}\rightarrow u_{\ell}$ from the vertex $u_0$ to a vertex
$u_\ell$ in $Q_{n,k}$. The path $P$ traverses the edges
$u_0u_1,u_1u_2,\ldots,u_{\ell-1}u_\ell$. Since the edge $u_{j-1}u_j$
is in $E_{d_j}$ for some $d_j\in\{0,1,\ldots,n\}$, and every vertex
is incident with exactly one edge in $E_{d_j}$, we can represent the
path $P$ from $u_0$ by the list $(d_1,\ldots, d_\ell)$, where $d_j$
indicates the type of the edge joining $u_{j-1}$ and $u_j$. For
example, in $Q_{5,3}$ the path originating from $00000$ determined
by $(2,0,5)$ is $00000\xrightarrow{2} 000{\bf 1}0 \xrightarrow{0}
00{\bf 101} \xrightarrow{5} {\bf 1}0101$. Note that the length of a
path determined by a list $I$ is the number of elements in $I$.

We use the following two lemmas to construct disjoint paths.
A {\it proper segment} of a list is a string of consecutive elements
that is not the full list.

\begin{lemma}\label{lem4}
No two cyclic permutations of a list of distinct elements have a
common proper initial segment.
\end{lemma}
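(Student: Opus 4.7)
The plan is short, because the hypothesis of distinctness does essentially all of the work. Let $L=(a_1,\ldots,a_\ell)$ be the list, and for each $i\in\{1,\ldots,\ell\}$ write $L_i=(a_i,a_{i+1},\ldots,a_\ell,a_1,\ldots,a_{i-1})$ for the cyclic permutation that begins at the $i$th entry. Two cyclic permutations $L_i$ and $L_j$ are distinct precisely when $i\ne j$; otherwise there is nothing to show.

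The observation I would then make is that the initial segment of $L_i$ of length $1$ is simply the one-element string $(a_i)$. Hence if $L_i$ and $L_j$ shared any initial segment of positive length, their first entries would coincide, giving $a_i=a_j$. Because the entries of $L$ are pairwise distinct, this forces $i=j$, contradicting $L_i\ne L_j$.

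Therefore two distinct cyclic permutations disagree already in their first position, and so cannot share a common initial segment of positive length, let alone a proper one. There is no real obstacle here; the lemma is a pure bookkeeping statement whose force comes from the fact that the map $i\mapsto a_i$ is injective under the distinctness hypothesis. Its later utility will presumably lie in encoding several internally disjoint paths in $Q_{n,k}$ as cyclic rotations of a single list of edge-types, so that Lemma~\ref{lem4} guarantees they branch apart at their very first step and thus share no intermediate vertex reachable by a common prefix of moves.
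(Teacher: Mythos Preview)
Your proof is correct and in fact cleaner than the paper's. You observe that two distinct cyclic shifts $L_i$ and $L_j$ already disagree in their very first entry, since $a_i\ne a_j$ by distinctness; hence no non-empty initial segment can be shared. The paper instead argues at the level of \emph{sets} of entries: any proper initial segment of $I_j$ (with $j\ge2$) contains $d_j$ but not $d_{j-1}$, whereas in $I_{j'}$ for $j'<j$ the element $d_{j-1}$ precedes $d_j$, so any initial segment of $I_{j'}$ containing $d_j$ must also contain $d_{j-1}$. Both arguments are valid, but yours is more direct: comparing position~$1$ suffices and avoids the auxiliary set-containment reasoning. Your closing remark about the intended application is also accurate; Lemma~\ref{distinct} and Lemma~\ref{lem5} use precisely this ``branch apart at the first step'' phenomenon to certify that the associated paths in $Q_{n,k}$ are internally disjoint.
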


\begin{proof}
Let $I_1=(d_1,d_2, \ldots, d_\ell)$ and $I_j=(d_j,\ldots,d_\ell,
d_1,\ldots, d_{j-1})$ for $2\leq j \leq \ell$.

Consider $j$ and $j'$ with $1\leq j'< j \leq \ell$.  A proper initial segment of
$I_j$ for $j\geq 2$ contains $d_j$ and not $d_{j-1}$, but this
is not true for any initial segment of $I_{j'}$, since when $d_j$ is not
the first element, $d_{j-1}$ also occurs in any initial segment containing
$d_j$.  Therefore, $I_{j}$ and $I_{j'}$ have no common proper initial segment.
\end{proof}

\begin{lemma}\label{distinct}
Let $S$ be a set of $\ell$ distinct elements of $\{0,1,\dots,n\}$.  Let
$\VEC I1m$ be orderings of $S$ such that no two have a common proper initial
segment.  If $\{0,1,\ldots,k\}$ are not all in $S$, then for any vertex $u$ in
$Q_{n,k}$, the lists $I_1, \ldots, I_m$ determine disjoint paths to a single
vertex.
\end{lemma}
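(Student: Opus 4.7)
The plan is to prove two things: that all $m$ paths determined by $I_1,\ldots,I_m$ from $u$ end at a common vertex, and that no two of them share an internal vertex. For the first point, I observe that traversing an edge of $E_j$ acts on the current vertex by XOR with a fixed $n$-bit pattern $v_j$: for $1\le j\le n$ this $v_j$ is the $j$th standard basis vector, while $v_0$ has $1$'s in exactly positions $1,\ldots,k$. Since XOR is commutative and each $I_a$ is a permutation of $S$, the endpoint reached from $u$ by following $I_a$ is $u\oplus\bigoplus_{i\in S}v_i$, which depends only on $S$.

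For internal disjointness, I argue by contradiction: suppose $P_a$ and $P_b$ share an internal vertex $w$. Let $A,B\subseteq S$ be the sets of edge types used in the prefixes of $I_a,I_b$ that reach $w$; since $w$ is neither $u$ nor the common endpoint, and a path visits each vertex at most once, both $A$ and $B$ are proper nonempty subsets of $S$. Equating the two expressions for $w$ gives $\bigoplus_{i\in A\triangle B} v_i = \mathbf{0}$, and I split on whether $0\in A\triangle B$.

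If $0\notin A\triangle B$, the $v_i$'s in the sum are distinct standard basis vectors, so linear independence modulo $2$ forces $A=B$; then $A$ is a proper initial segment shared (as the set of its elements) by $I_a$ and $I_b$, contradicting the hypothesis. If $0\in A\triangle B$, then reading the vector equation coordinate by coordinate shows that $(A\triangle B)\setminus\{0\}$ must equal exactly $\{1,\ldots,k\}$: each position $j\le k$ must appear in order to cancel the $1$ contributed there by $v_0$, and no position $j>k$ may appear. Together with $0\in A\cup B\subseteq S$, this forces $\{0,1,\ldots,k\}\subseteq S$, contradicting the hypothesis on $S$.

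The hardest step I expect is this second case, since $v_0$ is the one vector involved that is not a standard basis vector and it is precisely its interaction with $v_1,\ldots,v_k$ that forces the hypothesis $\{0,1,\ldots,k\}\not\subseteq S$ to be used essentially. A smaller convention worth flagging is that ``common proper initial segment'' should be read as common set of elements rather than common ordered list, matching the interpretation in the proof of Lemma~\ref{lem4}, where what distinguishes segments is the membership of $d_{j-1}$ rather than its position.
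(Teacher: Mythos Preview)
Your proof is correct and follows essentially the same route as the paper's: both show that two prefixes $T,T'\subseteq S$ reach the same vertex from $u$ precisely when either $T=T'$ or $T\triangle T'=\{0,1,\ldots,k\}$, ruling out the latter by hypothesis and deducing from the former a shared proper initial segment (in the set sense you correctly flagged). The only stylistic difference is that you phrase the argument via XOR and linear independence of the $v_j$, while the paper speaks directly of positions in which the intermediate vertex differs from $u$; the two formulations are equivalent.
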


\begin{proof}
Let $S=\{d_1,\ldots, d_\ell\}$ and $[k]=\{1,\ldots,k\}$. If $0\notin S$, then
each path reaches the vertex $v$ that differs from $u$ in the positions of $S$.
If $0\in S$, then each path reaches the vertex $v$ that differs from $u$ in the
positions of $([k]-S)\cup (S-\{0\}-[k])$.

If two subsets of $S$ both contain $0$ or both omit $0$, then they produce
paths to the same vertex only if they are the same set.  If $0\in T\subseteq S$
and $0\notin T'\subseteq S$, then $T$ and $T'$ produce paths to the same vertex
only if they agree outside $[k]$ and intersect $[k]$ in complementary subsets.
By the hypothesis that $0,1,\dots,k$ are not all present in $S$, this cannot
occur.

Therefore, the paths from $u$ determined by $I_j$ and $I_{j'}$ have a common
internal vertex if and only if they have a common proper initial segment.
\end{proof}

\begin{lemma}\label{lem5}
Let $I=(d_1,\ldots,d_\ell)$ with all $d_i$ distinct and in $\{0,1,\ldots,n\}$.
If $\{0,1,\ldots,k\}$ are not all in $I$, then for any vertex $u$ in $Q_{n,k}$,
the $\ell$ cyclic permutations of $I$ determine disjoint paths to a single
vertex.
\end{lemma}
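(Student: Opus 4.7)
The plan is to observe that Lemma~\ref{lem5} is an immediate consequence of the two preceding lemmas, so the proof is essentially a one-line composition. First I would set $S=\{d_1,\ldots,d_\ell\}$ and note that the $\ell$ cyclic permutations of $I$ are $\ell$ orderings of the same set $S$, with $|S|=\ell$ because the $d_i$ are distinct.

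Next I would invoke Lemma~\ref{lem4} to conclude that no two of these cyclic permutations share a common proper initial segment. This supplies the hypothesis needed to apply Lemma~\ref{distinct} to the family $I_1,\ldots,I_\ell$ of orderings of $S$. Together with the standing assumption that $\{0,1,\ldots,k\}$ are not all in $S$, Lemma~\ref{distinct} then yields that the $\ell$ paths starting at $u$ determined by these cyclic permutations are pairwise internally disjoint and end at a single common vertex $v$, which is exactly what we need.

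There is no real obstacle here; the only thing to verify is the trivial bookkeeping that ``elements of $I$'' and ``elements of $S$'' refer to the same set, so that the hypothesis ``$\{0,1,\ldots,k\}\not\subseteq I$'' of Lemma~\ref{lem5} matches the hypothesis ``$\{0,1,\ldots,k\}\not\subseteq S$'' of Lemma~\ref{distinct}. Thus the proof will be at most two sentences, simply quoting Lemma~\ref{lem4} to check the no-common-initial-segment condition and then applying Lemma~\ref{distinct}.
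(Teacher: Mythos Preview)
Your proposal is correct and matches the paper's proof exactly: the paper's argument is the single sentence ``The conclusion follows immediately from Lemmas~\ref{lem4} and~\ref{distinct},'' which is precisely the composition you describe. Your bookkeeping remark about identifying the elements of $I$ with the set $S$ is accurate and is the only thing to check.
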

\begin{proof}
The conclusion follows immediately from Lemmas~\ref{lem4} and~\ref{distinct}.
\end{proof}

When $G$ is the complete graph $K_n$ with $n\ge3$, we have $D_\omega(G)=1$ but
$d_\omega(G)=2$ for $2\le\omega\le n-1$.  Since $Q_{2,2}=K_4$, we consider
$Q_{n,k}$ with $n\ge 3$ and $2\leq k\leq n$.

\begin{theorem}\label{lem1}
For any two distinct vertices $u$ and $v$ in $Q_{n,k}$ with $n\ge3$ and
$2\leq k\leq n$, there exist $n+1$ disjoint $u,v$-paths of length at most
$d(Q_{n,k})+1$, such that at least $n-\lfloor\frac{k}{2}\rfloor-1$ of the
paths have length at most $d(Q_{n,k})$.
\end{theorem}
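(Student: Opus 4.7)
The plan is to construct, for each dimension $d \in \{0,1,\dots,n\}$, a $u,v$-path $P_d$ whose first edge lies in $E_d$, encoded by its dimension list; pairwise internal disjointness will follow from the no-common-proper-initial-segment paradigm of Lemma~\ref{distinct}. By vertex-transitivity I may assume $u = 0^n$, and Proposition~\ref{prop1} permits me to assume further that $v$ has $1$'s exactly at $S_1 := \{1,\dots,r\}$ and $S_2 := \{k+1,\dots,k+s\}$. Set $T_1 := \{r+1,\dots,k\}$ and $T_2 := \{k+s+1,\dots,n\}$. Proposition~\ref{prop2} gives $d(u,v) = s + \min\{r,k-r+1\}$, while $d(Q_{n,k}) = n - \lfloor k/2\rfloor$.

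I will split on whether a shortest $u,v$-path uses the $k$-complementary edge. In \emph{Case A} ($r \le \lceil k/2\rceil$, so $d(u,v) = r+s$), I fix an ordering $L$ of $S_1 \cup S_2$ and let $P_d$, for $d \in S_1 \cup S_2$, be the cyclic permutation of $L$ starting with $d$; Lemma~\ref{lem5} then yields $r+s$ disjoint paths of length $r+s \le d(Q_{n,k})$. For $d \in T_2$ I will use the out-and-back detour $P_d = (d,L,d)$ of length $r+s+2$; for $d \in \{0\} \cup T_1$ I will use complementation-based paths, where $P_0$ begins with the $k$-complementary edge and then flips the bits of $T_1 \cup S_2$, and $P_d$ for $d \in T_1$ flips $d$, then takes the complementary edge, then flips the bits of $(T_1\setminus\{d\}) \cup S_2$; these have length $k-r+s+1$. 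The inner orderings must be chosen so that no two lists share a common proper initial segment. \emph{Case B} ($r > \lceil k/2\rceil$, so $d(u,v) = k-r+s+1$) is handled by the symmetric construction, with the direct cyclic family now on $\{0\}\cup T_1\cup S_2$ and detours covering $S_1$ and $T_2$.

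The length bound $d(Q_{n,k})+1$ on every $P_d$ is an arithmetic check on the values $r+s$, $r+s+2$, and $k-r+s+1$ using the case hypothesis and $s \le n-k$; in the extremal configuration $r+s = d(Q_{n,k})$ the set $T_2$ is forced to be empty, so the longer out-and-back detours are never needed. The count of short paths (length $\le d(Q_{n,k})$) starts at $r+s$ from the cyclic family alone; when $r+s$ is small, the detour length $r+s+2$ is itself at most $d(Q_{n,k})$, providing enough additional short paths to reach the required total of $n - \lfloor k/2\rfloor -1$. The hardest step will be the bookkeeping of pairwise internal disjointness across the mixed families of cyclic, out-and-back, and complementation-based paths: the internal orderings must be specified so that Lemma~\ref{distinct}'s initial-segment hypothesis holds for every pair, and I must also check that no list contains all of $\{0,1,\dots,k\}$. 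The latter is automatic when $r \ge 1$, since $S_1$ is absent from every complementation-based list; the corner case $r = 0$ will require a modified $P_0$ that uses a second complementary edge rather than traversing all of $[k]$ individually.
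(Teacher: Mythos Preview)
Your construction has a genuine length gap in Case~A. The complementation-based paths you propose for $d\in\{0\}\cup T_1$ have length $k-r+s+1$, and this is \emph{not} bounded by $d(Q_{n,k})+1=n-\lfloor k/2\rfloor+1$ in general. The inequality $k-r+s+1\le n-\lfloor k/2\rfloor+1$ is equivalent to $s\le n-k+r-\lfloor k/2\rfloor$, which fails whenever $r<\lfloor k/2\rfloor$ and $s$ is near $n-k$. Concretely, take $n=6$, $k=4$, $r=1$, $s=2$: then $d(Q_{6,4})=4$, but your complementation-based paths have length $4-1+2+1=6>5$. Since these are your \emph{only} paths through dimensions $0,r+1,\dots,k$, you do not obtain $n+1$ paths of length at most $d(Q_{n,k})+1$. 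The ``arithmetic check'' you mention does not go through.

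The paper handles this by splitting differently: its primary case division is on whether $d(u,v)<d(Q_{n,k})$ or $d(u,v)=d(Q_{n,k})$, not on which of $r,\,k-r+1$ is smaller. When $d(u,v)<d(Q_{n,k})$, it uses out-and-back detours $(h,I,h)$ of length $d(u,v)+2$ for \emph{every} $h$ outside the base list $I$---including $h=0$ and $h\in T_1$---so all detour lengths are at most $d(Q_{n,k})+1$ automatically. The mixed construction (inserting the $S_2$-block into cyclic permutations of the complementary list) is reserved solely for the extremal case $d(u,v)=d(Q_{n,k})$, where one first shows $s=n-k$ (so $T_2=\emptyset$) and $r\in\{\lceil k/2\rceil,\lfloor k/2\rfloor+1\}$; only under these constraints does the complementation-based length $k-r+s+1$ come out to at most $d(Q_{n,k})+1$. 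Your plan can be repaired by adopting this case split and using the out-and-back detour $(h,L,h)$ for $h\in\{0\}\cup T_1$ as well whenever $r+s<d(Q_{n,k})$.
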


\begin{proof}

The vertex transitivity of $Q_{n,k}$ and Proposition~\ref{prop1} allow us to
assume $u=0^{n}$ and $v=0^{n-k-j}1^{j}0^{k-i}1^{i}$, where $0\leq i \leq k$ and
$0\leq j \leq n-k$. We will construct the desired paths.  Let
$r=\min\{i+j,k-i+j+1\}$.  Recall that $d(Q_{n,k})=n-\lfloor\frac k2\rfloor$.
We consider two cases according to the relationship between $r$ and
$d(Q_{n,k})$.

\vskip6pt

{\it Case 1: $r< d(Q_{n,k})$.} We first specify a list $I$ of length
$r$, in two cases (Table\ref{tab1}). Note that $r=i+j$ when $i\le
k-i$ and $r=k-i+j+1$ when $i>k-i$.

\begin{table}[ht]
\caption{Two cases for $r< d(Q_{n,k})$.} \label{tab1} \centering
\begin{tabular}[h]{|c|l||c|}
\hline
Case & Condition & $I$ \\
\hline
A&$i\le k-i$& $(1,\dots,i,k+1,\dots,k+j)$\\
B&$i> k-i$  & $(0,i+1,\dots,k,k+1,\dots,k+j)$\\
\hline
\end{tabular}
\end{table}
Not all of $0,1,\dots,k$ appear in $I$, since having $0$ requires $i>k/2$
(Case B), and then also having $1$ requires $i=0$, a contradiction.  Hence
Lemma~\ref{lem5} applies, so the $r$ cyclic permutations of $I$ determine $r$
disjoint $u,v$-paths of length $r$.

The remaining $n-r+1$ paths, with length $r+2$, are specified by adding one of
$\{0,1,\dots,n\}-I$ at both the beginning and the end of $I$.  Let $I'$ be the
list obtained by adding $h$, and let $P$ be the path from $u$ determined by
$I'$.

If $h>k$ or if $h\in[k]$ and $0\notin I$, then $u$ and $v$ agree in position
$h$, and $P$ is the only path in the constructed set containing vertices that
differ from them in position $h$.  Furthermore, all internal vertices of $P$
differ from $u$ and $v$ in position $h$.

If $h=0$, then $i\le k-i$ (Case A).  All internal vertices of $P$ differ from
$u$ and $v$ in positions $k$ and $k-1$, and no other path has any such vertices.

If $h\in[k]$ and $0\in I$, then $1\le h\le i$ (Case B).  The first vertex of
$P$ after $u$ differs from $u$ only in position $h$.  The next vertex disagrees
with $u$ on all of positions $1,\dots,i$ except $h$, and this remains true of
all other internal vertices of $P$, because $I$ contains no element of
$\{1,\dots,i\}$.  All the other paths in the construction have no vertices
satisfying either of these conditions.

Since $r<d(Q_{n,k})$, all the paths have length at most $d(Q_{n,k})+1$;
in fact, all have length at most $d(Q_{n,k})$ unless $r=d(Q_{n,k})-1$.
In this case there are $r$ paths of length $r$, which suffices since
$r=d(Q_{n,k})-1=n-\fktwo-1$.

\vskip6pt

{\it Case 2: $r\ge d(Q_{n,k})$.} Let $s=d(Q_{n,k})$.  Since
$r=\min\{i+j,k-i+j+1\}$, we have $i+j\ge s$ and $k-i+j+1\ge s$, so
$k+2j+1\ge 2s$.  If $j\le n-k-1$, then $2n-k-1\ge 2s=2n-2\fktwo$,
which is impossible.  Hence $j=n-k$. With $j=n-k$, we have $i\ge
\cktwo$ and $k+1-i\ge\cktwo$, so $\cktwo\le i\le \fktwo+1$. When
$i=\cktwo$, we have $r=i+j$; when $i=\fktwo+1$, we have $r=k-i+j+1$.
Both cases apply when $k$ is odd. In either case,
$r=n-\fktwo=d(Q_{n,k})$, and we define three
lists(Table~\ref{tab2}).

\begin{table}[ht]\caption{Two cases for $j=n-k$ and $\cktwo\le i\le \fktwo+1$.} \label{tab2}
\centering
\begin{tabular}[h]{|c|l||c|c|c|}
\hline
Case &\quad $i$ & $I$ & $J$ & $I'$ \\
\hline
A&$\cktwo$&   $(1,\dots,i,k+1,\dots,n)$& $(k+1,\ldots,n)$& $(0,i+1,\dots,k)$\\
B&$\fktwo+1$& $(0,i+1,\dots,k,k+1,\dots,n)$& $(k+1,\ldots,n)$& $(1,\dots,i)$\\
\hline
\end{tabular}
\end{table}

Since $j=n-k$, in each case $I$ has length $r$.  By Lemma~\ref{lem5}, the
cyclic permutations of $I$ yield $r$ disjoint $u,v$-paths of length $r$. Since
$r=n-\fktwo=d(Q_{n,k})$, this yields enough paths of length at most
$d(Q_{n,k})$.  Let $T=\{i+1,\dots,k\}$ in Case A, $T=\{1,\dots,i\}$ in Case B.
Every vertex in each of these paths is constant in the positions of $T$
(all-$0$ or all-$1$), and in fact all-$0$ in Case A.

Since $r=n-\fktwo$, we only need to find $\fktwo+1$ more paths of length at
most $d(Q_{n,k})+1$.  Note that $I'$ has length $\fktwo+1$.  Form $\fktwo+1$
lists by inserting $J$ after the first element of each cyclic permutation of
$I'$.  The first and last lists are $(0,J,i+1,\dots,k)$ and
$(k,J,0,i+1,\dots,k-1)$ in Case A, $(1,J,2,\dots,i)$ and $(i,J,1,\dots,i-1)$ in
Case B.  Each of these lists has length $n-\cktwo+1$, which is at most
$d(Q_{n,k})+1$.

Each of these lists is an ordering of a single set of elements.  By an argument
like that of Lemma~\ref{lem4}, they have no common proper initial segments.
Since they also do not contain all of $\{0,1,\dots,k\}$, by
Lemma~\ref{distinct} these paths are disjoint.

In Case A, each internal vertex on each of these paths is not all $0$ in the
positions of $T$.  In Case B, each internal vertex on each of these paths
has between $1$ and $|T|-1$ nonzero positions in $T$.  Hence these paths are
disjoint from the earlier paths.
\end{proof}

\section{Consequences}

From Theorem~\ref{lem1} and the definition of wide diameter, we immediately
obtain an upper bound on $d_{\omega}(Q_{n,k})$.

\begin{corollary} \label{cor1} If $n\ge3$ and $2\leq k \leq n$, then
 $$
 d_{\omega}(Q_{n,k}) \le \left\{\begin{array}{ll}
 d(Q_{n,k})\ & {\rm for}\ 1 \leq \omega < n-\fktwo,\\
 d(Q_{n,k})+1\ & {\rm for}\ n-\lfloor\frac{k}{2}\rfloor \leq \omega \leq n+1.
 \end{array}\right.
 $$
\end{corollary}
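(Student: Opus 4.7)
The plan is to derive the bound directly from Theorem~\ref{lem1}, treating the corollary as essentially unpacking what the theorem produces against the definition of the $\omega$-wide diameter. Fix arbitrary distinct vertices $u,v\in V(Q_{n,k})$. Theorem~\ref{lem1} supplies a family $\mathcal{P}$ of $n+1$ internally disjoint $u,v$-paths, each of length at most $d(Q_{n,k})+1$, in which at least $n-\fktwo-1$ of the paths have length at most $d(Q_{n,k})$.

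First I would treat the range $1\le\omega<n-\fktwo$, equivalently $\omega\le n-\fktwo-1$. Selecting any $\omega$ of the short paths in $\mathcal{P}$ is possible because at least $n-\fktwo-1\ge\omega$ such paths exist, and this exhibits $\omega$ internally disjoint $u,v$-paths of length at most $d(Q_{n,k})$. Since $u,v$ were arbitrary, the definition of wide diameter yields $d_\omega(Q_{n,k})\le d(Q_{n,k})$.

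For the remaining range $n-\fktwo\le\omega\le n+1$, I would simply take any $\omega$ paths from the full family $\mathcal{P}$, which is permissible since $|\mathcal{P}|=n+1\ge\omega$. Each chosen path has length at most $d(Q_{n,k})+1$, so the definition of wide diameter gives $d_\omega(Q_{n,k})\le d(Q_{n,k})+1$. There is effectively no obstacle beyond this bookkeeping: all the substance resides in Theorem~\ref{lem1}, and the only point worth double-checking is that the threshold $n-\fktwo$ in the statement exactly matches the count $n-\fktwo-1$ of short paths that Theorem~\ref{lem1} provides, which is precisely why the first regime uses the strict inequality $\omega<n-\fktwo$.
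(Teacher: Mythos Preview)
Your proposal is correct and follows essentially the same approach as the paper: both invoke Theorem~\ref{lem1} directly, selecting $\omega$ of the short paths when $\omega\le n-\fktwo-1$ and $\omega$ arbitrary paths from the full family when $\omega\le n+1$. Your version simply spells out the bookkeeping more carefully.
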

\begin{proof}
When $\omega<n-\fktwo$, Theorem~\ref{lem1} provides at least $\omega$ disjoint
paths with length at most $d(Q_{n,k})$ joining any two vertices in $Q_{n,k}$.
When $\omega\le n+1$, it provides at least $\omega$ such paths with length
at most $d(Q_{n,k})+1$.
\end{proof}

%It is showed by Liu~\cite{l08} that $\kappa (Q_{n,k})=\lambda
%(Q_{n,k})=n+1$. Thus, the fault diameter $D_\omega(Q_{n,k})$ is
%well-defined when $\omega \leq n+1$.
We next give a lower bound on the fault diameter $D_\omega(Q_{n,k})$.

\begin{lemma} \label{lem2} Fix $n\ge3$.  If $2\leq k \leq n$
and $n-\lfloor\frac{k}{2}\rfloor \leq \omega \leq n+1$, then
$$D_{\omega}(Q_{n,k}) \geq d(Q_{n,k})+1.$$
\end{lemma}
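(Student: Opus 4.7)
The plan is to exhibit explicit vertices $u,v$ and a deletion set $S$ of size $\omega-1$ with $d_{Q_{n,k}-S}(u,v)\geq d(Q_{n,k})+1$. By monotonicity of $D_\omega$ in $\omega$, it suffices to handle the minimal case $\omega=n-\lfloor k/2\rfloor$, where $|S|=\omega-1=d(Q_{n,k})-1$. Using vertex-transitivity of $Q_{n,k}$, I take $u=0^n$ and set $v=1^{n-k}0^{\lfloor k/2\rfloor+1}1^{\lceil k/2\rceil-1}$. With $r=\lceil k/2\rceil-1$ disagreements in the first $k$ positions and $s=n-k$ in the last $n-k$, Proposition~\ref{prop2} gives $d(u,v)=(n-k)+(\lceil k/2\rceil-1)=d(Q_{n,k})-1$.

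Next I classify the $n+1$ neighbors of $u$---one obtained by flipping each of the $n$ coordinates, plus the $k$-complementary neighbor $\bar u=0^{n-k}1^k$---by their distances to $v$. Flipping a bit at a position $i$ in the support $\{1,\ldots,\lceil k/2\rceil-1\}\cup\{k+1,\ldots,n\}$ of $v$ creates a match and so produces a neighbor at distance $d(Q_{n,k})-2$ from $v$; there are exactly $(\lceil k/2\rceil-1)+(n-k)=d(Q_{n,k})-1$ such ``closer'' neighbors. For each of the remaining $\lfloor k/2\rfloor+2$ neighbors---the flips at positions $i\in\{\lceil k/2\rceil,\ldots,k\}$, where $v$ has a $0$, and the $k$-complement $\bar u$---the new first-$k$ disagreement count becomes $\lceil k/2\rceil$ or $\lfloor k/2\rfloor+1$, and a short case analysis shows $\min\{r_{\mathrm{new}},k-r_{\mathrm{new}}+1\}=\lceil k/2\rceil$ in either case and for either parity of $k$. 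So each of these ``farther'' neighbors sits at distance $(n-k)+\lceil k/2\rceil=d(Q_{n,k})$ from $v$.

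I then set $S$ equal to the $d(Q_{n,k})-1=\omega-1$ closer neighbors of $u$. In $Q_{n,k}-S$, each remaining neighbor $w$ of $u$ satisfies $d(w,v)=d(Q_{n,k})$, and because deleting vertices never decreases distances, $d_{Q_{n,k}-S}(w,v)\geq d(Q_{n,k})$. Every $u,v$-path in $Q_{n,k}-S$ therefore has length at least $1+d(Q_{n,k})=d(Q_{n,k})+1$, giving $D_\omega(Q_{n,k})\geq d(Q_{n,k})+1$, and monotonicity extends this to all $\omega$ in the stated range.

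The main obstacle is the distance computation for the ``farther'' neighbors, particularly the interaction with the $k$-complementary edge across parities of $k$. The choice $r=\lceil k/2\rceil-1$ is delicate: it places the initial $r$ at the symmetric threshold of the $\min\{r,k-r+1\}$ formula, so that every single-step perturbation from $u$ that does not match a support position of $v$ lands on the opposite side of the threshold at the same one-larger minimum value. Without this precise placement some neighbor would tie at the intermediate distance $d(Q_{n,k})-1$, and deleting only the strictly-closer neighbors would leave a $u,v$-shortcut of length $d(Q_{n,k})$ through that tied neighbor.
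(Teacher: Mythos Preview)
Your proof is correct and takes essentially the same approach as the paper: the same vertices $u=0^n$, $v=1^{n-k}0^{\lfloor k/2\rfloor+1}1^{\lceil k/2\rceil-1}$, and the same deletion set consisting of the neighbors of $u$ whose single $1$ lies in the support of $v$. Your treatment is slightly more explicit in verifying the distance from the $k$-complementary neighbor $\bar u$ separately, but the argument is identical to the paper's.
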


\begin{proof}
Since $D_\omega(G)$ is nondecreasing in $\omega$,
proving $D_{n-\lfloor\frac{k}{2}\rfloor}(Q_{n,k}) \geq d(Q_{n,k})+1$
is sufficient.
%the fault diameter $D_{\omega}(Q_{n,k})$ is the maximum diameter of a
%subgraph obtained by deleting fewer than $\omega$ vertices.
Let $u=0^{n}$ and $v=1^{n-k}0^{k-i}1^{i}$, where $i=\cktwo-1$.  Note that $v$
has $1$s in $n-\fktwo-1$ positions.  Let $W$ be the set of neighbors of $u$
whose single $1$ occurs in a position where $v$ has a $1$, so $|W|=n-\fktwo-1$.
On any $u,v$-path in $Q_{n,k}-W$, the neighbor of $u$ has a single $1$ in
a position among $i+1,\dots,k$ or is $0^{n-k}1^k$.

By Proposition~\ref{prop2}, the distance between $v$ and a neighbor $u'$ of $u$
not in $W$ is $n-k+\min\{i+1,k-i\}$.  Since $i=\cktwo-1$, the distance is
$n-\fktwo$, which equals $d(Q_{n,k})$.  Hence every $u,v$-path in
$Q_{n,k}-W$ has length at least $d(Q_{n,k})+1$.
%Therefore, $D_{n-\lfloor\frac{k}{2}\rfloor}(Q_{n,k}) \geq d(Q_{n,k})+1$.
\end{proof}

\begin{theorem} \label{T3} For $3\leq n$ and $2\leq k \leq n$,
$$D_\omega(Q_{n,k})=d_\omega(Q_{n,k})=\begin{cases}
    d(Q_{n,k}) & \textrm{for $1 \leq \omega < n-\lfloor\frac{k}{2}\rfloor$;}  \\
    d(Q_{n,k})+1 & \textrm{for $n-\lfloor\frac{k}{2}\rfloor \leq \omega \leq n+1$.}
    \end{cases}$$

\end{theorem}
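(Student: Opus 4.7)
The theorem follows by chaining together the upper bound of Corollary~\ref{cor1}, the lower bound of Lemma~\ref{lem2}, and the general inequality $D_\omega(G)\le d_\omega(G)$ from the introduction. So the plan is simply to assemble these ingredients in the right order; no new construction is needed.

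First I would split on the range of $\omega$. For $1\le\omega<n-\fktwo$, I would write the chain
\[
d(Q_{n,k}) \;=\; D_1(Q_{n,k}) \;\le\; D_\omega(Q_{n,k}) \;\le\; d_\omega(Q_{n,k}) \;\le\; d(Q_{n,k}),
\]
where the first equality is the definition of $D_1$, the two middle inequalities come from the monotonicity of $D_\omega$ in $\omega$ and from $D_\omega\le d_\omega$ noted in the introduction, and the final inequality is the relevant case of Corollary~\ref{cor1}. All four quantities therefore equal $d(Q_{n,k})$.

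Next, for $n-\fktwo\le\omega\le n+1$, I would combine Lemma~\ref{lem2} with Corollary~\ref{cor1} to get
\[
d(Q_{n,k})+1 \;\le\; D_\omega(Q_{n,k}) \;\le\; d_\omega(Q_{n,k}) \;\le\; d(Q_{n,k})+1,
\]
where the first inequality is Lemma~\ref{lem2} (valid because $D_\omega$ is nondecreasing, so the bound at $\omega=n-\fktwo$ propagates upward), the middle is again $D_\omega\le d_\omega$, and the last is the second case of Corollary~\ref{cor1}. This forces equality with $d(Q_{n,k})+1$ throughout.

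Since both cases reduce to already-established facts, there is essentially no obstacle: the only thing to be mildly careful about is invoking monotonicity of $D_\omega$ and $d_\omega$ in $\omega$ correctly, so that the Lemma~\ref{lem2} bound (stated as $\ge$) is known to hold for every $\omega$ in the claimed range, not just the threshold value $n-\fktwo$. I would mention this monotonicity explicitly when writing up the proof.
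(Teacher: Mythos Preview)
Your proposal is correct and matches the paper's own proof essentially line for line: the paper also splits on $\omega$, chains $d(Q_{n,k})\le D_\omega(Q_{n,k})\le d_\omega(Q_{n,k})$ with Corollary~\ref{cor1} in the first range, and invokes Lemma~\ref{lem2} together with Corollary~\ref{cor1} in the second. One small simplification: Lemma~\ref{lem2} is already stated for every $\omega$ with $n-\lfloor k/2\rfloor\le\omega\le n+1$ (the monotonicity step is inside its proof), so you need not re-argue propagation from the threshold value.
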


\begin{proof} Since $d(Q_{n,k}) \leq D_\omega(Q_{n,k})\leq d_\omega(Q_{n,k})$
for $1\leq \omega \leq n+1$, Corollary~\ref{cor1} yields
$D_\omega(Q_{n,k})=d_\omega(Q_{n,k})=d(Q_{n,k})$ for
$1 \leq \omega <n-\lfloor\frac{k}{2}\rfloor$.

For $n-\lfloor\frac{k}{2}\rfloor \leq \omega \leq n+1$,
Corollary~\ref{cor1} and Lemma~\ref{lem2} yield
$D_\omega(Q_{n,k})=d_\omega(Q_{n,k})=d(Q_{n,k})+1$.
\end{proof}

Theorem~\ref{T3} shows that the fault diameter $D_\omega(Q_{n,k})$
equals the wide diameter $d_\omega(Q_{n,k})$ for the enhanced
hypercubes $Q_{n,k}$.  More importantly, they equal the traditional diameter
when $1\leq \omega < n-\lfloor\frac{k}{2}\rfloor$, and they exceed it
only by $1$ when $n-\lfloor\frac{k}{2}\rfloor \leq \omega \leq n+1$.
Thus, the resilience of enhanced hypercubes is similar to that of hypercubes,
which increases the appeal of enhanced hypercubes.

\section*{Acknowledgements}

This work was supported partially by NSFC (Nos. 11101378, 11571044).
Research of D.B. West was supported by 1000 Talent Plan, State
Administration of Foreign Experts Affairs, China.

\baselineskip 16pt

\end{document}